\newcommand{\reals}{{\mathbb{R}}}
\newcommand{\norm}[1]{\left\lVert#1\right\rVert}
\newcommand{\mnorm}[1]{{\left\vert\kern-0.25ex\left\vert\kern-0.25ex\left\vert #1 
    \right\vert\kern-0.25ex\right\vert\kern-0.25ex\right\vert}}
\newcommand{\mc}{\mathcal}
\newcommand{\A}{\mathcal{A}}
\newtheorem{definition}{Definition} 
\newtheorem{theorem}{Theorem}
\newtheorem{assumption}{Assumption}
\algnewcommand\algorithmicforeach{\textbf{for each}}
\begin{document}
%Here goes the title
\title{\Large \bf Tolling for Constraint Satisfaction in Markov Decision Process Congestion Games}
\author{Sarah H. Q. Li$^{1}$, Yue Yu$^{1}$, Daniel Calderone$^{2}$, Lillian Ratliff$^{2}$, Beh\c cet\ A\c c\i kme\c se$^{1}$% <-this % stops a space
\thanks{*This work was is supported by NSF award CNS-1736582.}% <-this % stops a space
\thanks{$^{1}$Authors are with the William E. Boeing  Department of Aeronautics and Astronautics, University of Washington, Seattle. 
        {\tt\small sarahli@uw.edu}
        {\tt\small yueyu@uw.edu}
        {\tt\small behcet@uw.edu}}%
\thanks{$^{2}$Authors are with the Department of Electrical Engineering, University of Washington, Seattle.
        {\tt\small ratliffl@uw.edu }
        {\tt\small djcal@uw.edu }}%
}

\maketitle
% Outline version
% \input{outline}
%Main body starts
\begin{abstract}
    Markov Decision Process (MDP) congestion game is an extension of classic congestion games, where a continuous population of selfish agents  each solve a Markov decision processes with congestion: the payoff of a strategy decreases as more population uses it. We draw parallels between key concepts from capacitated congestion games and MDPs. In particular, we show that the population mass constraints in MDP congestion games are equivalent to imposing tolls/incentives on the reward function, which can be utilized by a social planner to achieve auxiliary objectives. We demonstrate such methods on a simulated Seattle ride-share model, where tolls and incentives are enforced for two distinct objectives: to guarantee minimum driver density in downtown Seattle, and to shift the game equilibrium towards a maximum social output.     
\end{abstract}
\section{Introduction}\label{sec:introduction}
We consider a class of non-cooperative games, \emph{Markov decision process congestion games} (MDPCG)~\cite{calderone2017markov,calderone2017infinite}, which combine features of classic \emph{nonatomic routing games} \cite{wardrop1952,beckmann1952continuous,patriksson2015traffic}---i.e.~games where a continuous population of agents each solve a shortest path problem---and \emph{stochastic games} \cite{shapley1953stochastic,mertens1981stochastic}---i.e.~games where each agent solves a Markov decision process (MDP). In MDP congestion games, similar to \emph{mean field games with congestion effects}~\cite{lasry2007mean,gueant2015existence}, a continuous population of selfish agents each solve an MDP with congestion effects on its state-action rewards: the payoff of a strategy decreases as more population mass chooses it. An equilibrium concept for MDPCG's akin to the \emph{Wardrop equilibrium}~\cite{wardrop1952} for routing games was introduced in~\cite{calderone2017markov}.

In this paper, we consider \emph{modifying} MDPCG's game rewards to enforce artificial state constraints that may arise from a system level. For example, in a traffic network with selfish users, tolls can be used to lower the traffic in certain neighbourhoods to decrease ambient noise. Drawing on techniques from \emph{capacitated routing games} \cite{larsson1995augmented,hearn1980bounding} and \emph{constrained MDPs} \cite{altman1999constrained,el2018controlled}, we derive reward modification methods that shifts the game equilibrium mass distribution. Alternatively, constraints may arise in the following scenario: central agent, which we denote by a \emph{social planner}, may enforce constraints to improve user performance as measured by an alternative objective. Equilibria of MDPCGs have been shown to exhibit similar inefficiencies to classic routing games \cite{roughgarden2005selfish,calderonephdthesis}.
As in routing games, we show how reward adjustments can minimize the gap between the equilibrium distribution and the socially optimal distribution \cite{pigou2017economics,cole2006much}.

Since MDPCG models selfish population behaviour under stochastic dynamics, our constraint enforcing methods can be considered as an incentive design framework. One practical application in particular is modifying the equilibrium behaviour of ride-sharing drivers competing in an urban setting. Ride-share has become a significant component of urban mobility in the past decade \cite{furuhata2013ridesharing}. As data becomes more readily available and computation more automated, drivers will have the option of employing sophisticated strategies to optimize their profits---e.g. as indicated in popular media, there are a number of mechanisms available to support strategic decision-making by ride-sharing drivers~\cite{rideshareguy, wired, pbs}. This provides the need for game theoretic models of ride-sharing competition \cite{ahmed2012uncertain}: while rational drivers only seek to optimize their individual profits, ride-sharing companies may choose to incentivize driver behaviours that are motivated by other objectives, such as maintaining driver coverage over large urban areas with varied rider demand as well as increasing overall profits.  

The rest of the paper is organized as follows. Section \ref{sec:related work} provides a discussion of related work. In Section \ref{sec:notation}, we introduce the optimization model of MDPCG's and highlight the relationship between the classical congestion game equilibrium---i.e.~\emph{Wardrop equilibrium}---and Q-value functions from MDP literature. Section \ref{sec:socialPlanner} shows how a social planner can shift the game equilibrium through reward adjustments. Section \ref{sec:dualFormulation} adopts the Frank-Wolfe numerical method~\cite{freund2016new} to solve the game equilibrium and provides an online interpretation of Frank-Wolfe in the context of MDPCG. Section \ref{sec:numerical examples} provides an illustrative application of MDPCG, in which agents repeatedly play a ride-share model game in the presence of population constraints as well as improving the social welfare. Section \ref{conclusion} concludes and comments on future work.

%and then shows the .    enforce constraints on tproves a similar relationship for constrained MDPCG, and outlines a behavioural interpretation of a Frank Wolfe numerical solution. 
 
\section{Related work}
\label{sec:related work}

Stochastic population games were first studied in the literature as \emph{anonymous sequential games} \cite{jovanovic1988anonymous,bergin1992anonymous,bergin1995anonymous,wikecek2015stationary}.
%MDPCG could be seen as introducing a potential function and Wardrop equilibrium concept into the anonymous sequential game framework. One of the
Recent developments in stochastic population games has been in the \emph{mean field game} \cite{lasry2007mean,gueant2011mean} community. Our work is related to potential mean field games \cite{lasry2007mean,gueant2011infinity} in discrete time, discrete state space \cite{gomes2010discrete} and mean field games on graphs \cite{gueant2014mean,gueant2015existence}. 

Our work can also be thought of as a \emph{continuous population potential game} \cite{Sandholm2001} where the strategies are policies of an MDP or as a modification of classic nonatomic routing games \cite{patriksson2015traffic} where routes have been replaced by policies.   

Techniques for cost modification to satisfy capacity constraints in nonatomic routing games were developed in 
\cite{larsson1995augmented}. See also \cite[Sec. 2.8.2]{patriksson2015traffic} for a discussion of tolling to enforce side-constraints  and \cite[Sec. 2.4]{patriksson2015traffic} for a discussion of tolling to improve social welfare in routing games.

\section{MDP Congestion Games}\label{sec:notation}
We consider a continuous population of selfish agents each solving a finite-horizon MDP with horizon of length $T$, finite state space $\mc{S}$, and finite action space $\mc{A}$. We use the notation $[T]=\{1, \ldots, T\}$ to denote the integer set of length $T$. 

The population mass distribution, $y \in \reals^{T \times |\mc{S}| \times |\mc{A}|}$, is defined for each time step $t \in [T]$, state $s \in \mc{S}$, and action $a\in \mc{A}$.  $y_{tsa}\in \reals$ is the population mass
in state $s$ taking action $a$ at time $t$, and $\sum_a y_{tsa}$ is the total population mass in state $s$ at time $t$. 

Let $P \in \reals^{(T-1) \times |\mc{S}| \times |\mc{S}| \times |\mc{A}|}$ be a stochastic transition tensor. $P_{ts'sa} \in \reals$ defines the probability of $y_{tsa}$ transitioning to state $s'$ in stage $t+1$ when action $a$ is chosen. The transition tensor $P$ is defined such that
\[P_{ts'sa}\geq 0 \quad \forall\ s', s \in \mc{S}, \ a \in \A ,\  t \in [T]\]
and
\[ \sum_{s' \in \mc{S}}\sum_{a \in \A}P_{ts'sa} = 1 \quad \forall \ s \in\mc{S}, \ t \in [T]\]
% A policy of a member of the population is denoted $\pi \in \times_{t,s} \Delta_{|
% \mc{A}|}$ population is denoted by $\pi_{ts} \in \mc{A}$ where $\pi_{ts} \in \Delta_{|\mc{A}|}$ is the mixed strategy chosen at state $s$ at time $t$.  
% An MDP congestion game is a variation of continuous population potential game, where the reward function defines the gradient of the potential with respect to population distribution \cite{calderone2017infinite}. The game population's aim is to optimize their expected payoff while factoring in congestion effects on shared facilities. 

The population mass distribution obeys the stochastic mass propogation equation
\begin{subequations}
\begin{align}
\sum\limits_{a\in\mathcal{A}}y_{0sa}& =p_s,\quad \forall \ s \in \mc{S} \notag \\
\sum\limits_{a\in\mathcal{A}} y_{t+1, sa} & = \sum\limits_{s'\in\mathcal{S}}\sum\limits_{a\in\mathcal{A}}P_{tss'a}y_{ts'a}, \,\, \forall \ t \in [T-1] \notag 
\end{align}
\end{subequations}
where $p_s$ is the initial population mass in state $s$.

The reward of each time-state-action triplet is given by a function $r_{tsa}: \reals_+^{T \times |\mc{S}| \times |\mc{A}|} \rightarrow \reals$. $r_{tsa}(y)$ is the reward for taking action $a$ in state $s$ at time $t$ for given population distribution $y$. One important case is where $r_{tsa}(y)$ simply depends on $y_{tsa}$, i.e. there exists functions $\ell_{tsa}:\reals_+ \rightarrow \reals$ such that 
\begin{align}
r_{tsa}(y) = \ell_{tsa}(\delta_{tsa}^T y)
\label{rform}
\end{align}
where $\delta_{tsa}$ is an indicator vector for $(t,s,a)$ such that $\delta_{tsa}^T y = y_{tsa}$.  
We say the game is a \emph{congestion game} if the rewards have the form of \eqref{rform} and the functions $\ell_{tsa}(y_{tsa})$ satisfy the following assumption.

\begin{assumption}
\label{assump:decreasing}
$\ell_{tsa}(y_{tsa})$ is a strictly decreasing continuous function of $y_{tsa}$ for each $t,s,a$.
\end{assumption}
Intuitively, the reward of each time-state-action triplet decreases as more members of the population choose that state-action pair at that time.  
We will  use $r(y)$ or $\ell(y)$ to refer to the tensor of all reward functions in each case.

Each member of the population solves an MDP with population dependent rewards  $r_{tsa}(y)$.  As in the MDP literature, we define Q-value functions for each $(t,s,a)$ pair as 
\begin{equation}\label{eq:qvalue}
\begin{split}
Q_{tsa} & = \begin{cases}
r_{tsa}(y) + \underset{s'}{\sum}P_{ts'sa}\Big(\underset{a}{\max}\,\, Q_{t+1,s'a}\Big) & t \in [T-1]\\
r_{tsa}(y)  & t = T
\end{cases}
\end{split}
\end{equation}
In the game context, Q-value function $Q_{tsa}(y)$ represents the distribution dependent payoff that the population receives when choosing action $a$ at $(t,s)$. The Q-value functions can be used to define an equilibrium akin to the Wardrop equilibrium of routing games~\cite{calderone2017markov}. 
\begin{definition}[Wardrop Equilibrium \cite{calderone2017markov}] \label{def:wardrop} A population distribution over time-state-action triplets, %set of feasible state action pairs, 
$\{y^{\star}_{tsa}\}_{t\in [T], s\in \mc{S}, a\in \mc{A}}$
is an \emph{MDP Wardrop equilbrium} for the corresponding MDPCG, if for any $(s,t)$,  $y^{\star}_{ts{a}} > 0 $ implies
\begin{equation}\label{eq:wardrop}
   Q_{ts{a}} \geq Q_{tsa'} \quad \forall a' \neq a, \ a' \in \mc{A} 
\end{equation}
\end{definition}
Intuitively, definition \ref{def:wardrop} amounts to the fact that at every state and time, population members only choose actions that are optimal.

When game rewards satisfy assumption~\ref{assump:decreasing}, MDPCG can be characterized as a \emph{potential game}. 
\begin{definition}[Potential Game \cite{Sandholm2001,calderone2017markov}]
We say that the MDPCG associated with rewards $r(y)$ is a \emph{potential game} if there exists a continuously differentiable function $F$ such that
\begin{align}
\frac{\partial F}{\partial y_{tsa}} = r_{tsa}(y) \notag
\end{align}
\end{definition}
In the specific case when the rewards have form~\eqref{rform}, we can use the potential function
\begin{align}\label{canon}
F(y) =  \sum_{t \in \mc{T}} \sum_{s \in \mc{S}} \sum_{a \in \mc{A}} \int_0^{y_{tsa}} \ell_{tsa}(x) \ dx 
\end{align}
As shown in \cite[Theorem 1.3]{calderone2017markov} given a potential function $F(y)$, the equilibrium to the finite horizon MDPCG can be found by solving the following optimization problem for an initial population distribution $p$. 
\begin{subequations}\label{eq:mdpgame}
\begin{align}
\underset{y}{\max} & \,\,
F(y) \notag\\
\mbox{s.t.} & \,\,\sum\limits_{a\in\mathcal{A}} y_{t+1, sa} = \sum\limits_{s'\in\mathcal{S}}\sum\limits_{a\in\mathcal{A}}P_{tss'a}y_{ts'a}, \,\, \forall \ t \in [T-1], \notag\\ %\in[T-1],\\
& \,\, \sum\limits_{a\in\mathcal{A}}y_{0sa}=p_s,\quad \forall \ s \in \mc{S},\notag\\
& \,\, y_{tsa}\geq 0,\quad \forall \ s\in\mathcal{S}, a\in\mathcal{A}, t\in[T]\notag
\end{align}
\end{subequations}
The proof that the optimizer of~\eqref{eq:mdpgame} is a Wardrop equilibrium relies on the fact that the Q-value functions~\eqref{eq:qvalue} are encoded in the KKT optimality conditions of the problem. The equilibrium condition \eqref{eq:wardrop} is then specifically derived from the complementary slackness condition~\cite{calderone2017markov}. When $F(y)$ has form~\eqref{canon} and Assumption~\ref{assump:decreasing} is satisfied, $F(y)$ is strictly concave, and MDPCG~\eqref{eq:mdpgame} has a unique Wardrop equilibrium.

\section{Constrained MDPCG}\label{sec:convergence}
In this section, we analyze the problem of shifting the game equilibrium by augmenting players' reward functions. In Section~\ref{sec:socialPlanner}, we show that introducing constraints cause the optimal population distribution to obey Wardrop equilibrium for a new set of Q-value functions. Section~\ref{sec:dualFormulation} outlines the Frank Wolfe numerical method for solving a constrained MDPCG as well as provides a population behavioural interpretation for the numerical method. 

\subsection{Planning Perspective: Model and Constraints}\label{sec:socialPlanner}
The Wardrop equilibrium of an MDP congestion game is given by \eqref{eq:mdpgame}. The planner may use additional constraints to achieve auxiliary global objectives. 
For example, in a city's traffic network, certain roads may pass through residential neighbourhoods. A city planner may wish to artificially limit traffic levels to ensure residents' wellbeing. 

We consider the case where the social planner wants the equilibrium population distribution to satisfy constraints of the form
\begin{equation}\label{eq:stateActionConstraints}
g^i(y) \geq 0 \quad \forall i \in \mc{I}
\end{equation}
where $g^i$ are continuously differentiable concave functions.

The social planner cannot explicitly constrain players' behaviour, but rather seeks to add incentive functions $\{f_{tsa}^i\}_{i\in\mc{I}}$ to the reward functions $\ell(y)$ in
order to shift the equilibrium to be within the constrained set defined by \eqref{eq:stateActionConstraints}.  The modified rewards have form  
\begin{equation}\label{eq:generalReward}
%\bar{\ell}(y) = \ell(y) + \sum\limits_{i \in \mc{I}}f_i(y)
\bar{r}_{tsa}(y) = r_{tsa}(y) + \sum\limits_{i \in \mc{I}}f_{tsa}^i(y)
\end{equation}
To determine the incentive functions, the social planner first solves the constrained optimization problem
%We can consider the constrained version of \eqref{eq:mdpgame} given by
\begin{subequations}
\begin{align}
\underset{y}{\max} & \,\, F(y) \notag \\
\mbox{s.t.} & \,\, \sum\limits_{a\in\mathcal{A}} y_{t+1, sa} = \sum\limits_{s'\in\mathcal{S}}\sum\limits_{a\in\mathcal{A}}P_{tss'a}y_{ts'a}, \,\, \forall \ t\in[T-1],\notag\\
& \,\,  \sum\limits_{a\in\mathcal{A}}y_{0sa}=p_s, \,\, \forall \ s \in \mc{S}\notag\\
& \,\,  y_{tsa} \geq 0, \,\, \forall \ s\in\mathcal{S},\ a\in\mathcal{A},\ t\in[T]\notag \\
& \,\,  g_i(y) \geq 0, \,\, \forall \ i \in \mc{I} \label{eq:newconstraints}
\end{align}
\label{eq:CMDPCG}
\end{subequations}
and then computes the incentive functions as
\begin{align}\label{eq:tollform}
f_{tsa}^i(y) = (\tau^i)^\star\frac{\partial g^i}{\partial y_{tsa}}(y)
\end{align}
where $\{(\tau^i)^\star \in \reals_+\}_{i \in \mc{I}}$ are the optimal Lagrange multipliers associated with the additional constraints~\eqref{eq:stateActionConstraints}.

The following theorem shows that the Wardrop equilibrium of the MDPCG with modified rewards in~\eqref{eq:generalReward} satisfies the new constraints in~\eqref{eq:newconstraints}.  
\begin{theorem}\label{thm:cWardrop}
Let the MDPCG~\eqref{eq:mdpgame} with rewards $r(y)$ be a potential game with a strictly concave potential function $F(y)$.  If $y^\star$ is a Wardrop equilibrium for a modified MDPCG with reward functions
%where $\tau^{\star}_i \in \reals_+$, $\forall i \in \mc{I}$ is the optimal dual variable of $g_i$.
\begin{align}\label{eq:modcosts}
\bar{r}_{tsa}(y) = r_{tsa}(y) + \sum_{i\in\mc{I}} (\tau^i)^{\star} \frac{\partial g^i}{\partial y_{tsa}}(y)
\end{align}
then $y^\star$ also solves~\eqref{eq:CMDPCG} and thus satisfies the additional constraints~\eqref{eq:stateActionConstraints}.
\end{theorem}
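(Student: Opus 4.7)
The plan is to exploit the fact that the modification turns the Lagrangian of \eqref{eq:CMDPCG} into a new potential function and then appeal to the equivalence between Wardrop equilibria and maximizers of the potential. First, I observe that the modified rewards in \eqref{eq:modcosts} are precisely the partial derivatives of
\begin{equation*}
\bar F(y) \;=\; F(y) + \sum_{i \in \mc{I}} (\tau^i)^{\star}\, g^i(y),
\end{equation*}
i.e. $\partial \bar F/\partial y_{tsa} = \bar r_{tsa}(y)$ for every $(t,s,a)$. Since $F$ is strictly concave by hypothesis, each $g^i$ is concave, and $(\tau^i)^{\star} \geq 0$, the function $\bar F$ is itself strictly concave. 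Hence the modified game is a potential game with strictly concave potential $\bar F$.

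Applying the characterization cited after \eqref{eq:mdpgame} to this modified potential game, the Wardrop equilibrium $y^{\star}$ of the modified MDPCG coincides with the unique maximizer of $\bar F(y)$ over the standard MDP feasibility set (the mass-propagation equalities, the initial-distribution equalities, and nonnegativity). Uniqueness follows from strict concavity of $\bar F$ on the convex feasible set.

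Next, I would connect this to the constrained problem \eqref{eq:CMDPCG}. Since \eqref{eq:CMDPCG} is a concave program (strictly concave objective, affine equality constraints, concave inequality constraints $g^i\geq 0$), assuming a standard constraint qualification (e.g.\ Slater's condition on the $g^i$), the KKT conditions are both necessary and sufficient. Writing the Lagrangian with the additional constraints dualized,
\begin{equation*}
\mc{L}(y,\tau) \;=\; F(y) + \sum_{i\in\mc{I}} \tau^i g^i(y),
\end{equation*}
the optimum $\tilde y$ of \eqref{eq:CMDPCG} together with its optimal multipliers $(\tau^i)^{\star}$ must maximize $\mc{L}(\,\cdot\,,\tau^{\star}) = \bar F$ over the remaining (MDP) feasibility set. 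Because $\bar F$ is strictly concave, this maximizer is unique, so $y^{\star} = \tilde y$. Therefore $y^{\star}$ solves \eqref{eq:CMDPCG} and in particular satisfies $g^i(y^{\star}) \geq 0$ for all $i\in\mc{I}$.

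The main obstacle is the constraint qualification step: one has to justify that the optimal primal variables of \eqref{eq:CMDPCG} are obtained by unconstrained (in the $g^i$) maximization of the Lagrangian with multipliers frozen at $(\tau^i)^{\star}$. Once that is in hand, the rest reduces to recognizing $\bar F$ as the correct potential and invoking uniqueness from strict concavity, both of which are direct. No heavy calculation is required; the proof is essentially a bookkeeping exercise comparing the KKT systems of the modified Wardrop problem and of \eqref{eq:CMDPCG}.
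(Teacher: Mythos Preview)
Your proposal is correct and follows essentially the same route as the paper: identify $\bar F(y)=F(y)+\sum_i(\tau^i)^\star g^i(y)$ as a strictly concave potential for the modified game, observe that the Lagrangian of \eqref{eq:CMDPCG} with $\tau$ frozen at $(\tau^i)^\star$ coincides with the potential problem for the modified game, and conclude by uniqueness of the maximizer. Your explicit flagging of the constraint qualification needed to pass from the constrained optimum to a Lagrangian maximizer is a point the paper leaves implicit.
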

\begin{proof}
The Lagrangian of \eqref{eq:CMDPCG}  is given by
\begin{equation}
\begin{split}
& L(y, \mu, V,\tau) = F(y)-\sum_{tsa}\mu_{tsa}y_{tsa} + \sum\limits_{i} \tau^{i} g^i(y)\\
& + \sum_{t = 1}^{T-1}\sum_s \Big(\sum_{as'} P_{t,ss'a}y_{t,s'a} - \sum_ay_{t+1,sa}\Big)V_{t+1,s} \\
& + \sum_s \Big( p_{s} - \sum_ay_{1sa} \Big)V_{1s}\\
\end{split}
\label{eq:constrained lagrangian}
\end{equation}
and note that by strict concavity
\begin{align}
\sup_{y\geq 0} \inf_{\mu \geq 0,V,\tau \geq 0} \ L(y,\mu,V,\tau) \notag
\end{align}
has unique solution, which we denote by $(y^\star,\mu^\star,V^\star,\tau^\star)$.  
We then note that 
\begin{equation}\label{eq:constrainedObj}
\bar{F}(y) = F(y) + \sum\limits_{i}(\tau^i)^\star g^i(y)
\end{equation}
is a potential function for the MDPCG with modified rewards \eqref{eq:modcosts}. Since $F(y)$ is strictly concave, $g^i(y)$ is concave, and $(\tau^i)^\star$ is positive, $\bar{F}(y)$ is strictly concave. 
The equilibrium for the MDPCG with modified rewards can be computed by solving \eqref{eq:CMDPCG} with $\bar{F}(y)$ as the objective. 

The Lagrangian for \eqref{eq:CMDPCG} with $\bar{F}(y)$ is given by
\begin{align}
\bar{L}(y,\mu,V) = L(y,\mu,V,\tau^\star)
\end{align}
Again by strict concavity
\begin{align}
\sup_{y \geq 0} \inf_{\mu \geq 0,V} \bar{L}(y,\mu,V) =
\sup_{y \geq 0} \inf_{\mu \geq 0,V} L(y,\mu,V,\tau^\star)
\end{align}
has a unique solution which we denote as $(\bar{y}^\star,\bar{\mu}^\star,\bar{V}^\star)$. It follows that $\bar{y}^\star = y^\star$.  Thus the equilibrium of the game with modified rewards, $\bar{y}^\star$ satisfies $g^i(\bar{y}^\star) = g^i(y^\star) \geq 0$ as desired. 
\end{proof}
For the social planner, Theorem~\ref{thm:cWardrop} has the following interpretation: in order to impose constraints of form~\eqref{eq:stateActionConstraints} on a MDPCG, the planner could solve the constrained game~\eqref{eq:CMDPCG} for optimal dual variables $\tau^{\star}$ and offer incentives of form~\eqref{eq:tollform}.  

\subsection{Population Perspective: Numerical Method}\label{sec:dualFormulation}

After the social planner has offered incentives, the population plays the Wardrop equilibrium defined by modified rewards~\eqref{eq:generalReward}; this equilibrium can be computed using the Frank Wolfe (FW) method~\cite{freund2016new}, given in Algorithm~\ref{alg:MSA}, with known optimal variables $\{\tau_i^{\star}\}$. 

FW is a numerical method for convex optimization problems with continuously differentiable objectives and compact feasible sets \cite{powell1982convergence}, including routing games.   One advantage of this learning paradigm is that the population does not need to know the function $r(\cdot)$. Instead, they simply react to the realized rewards of previous game at each iteration. It also provides an interpretation for how a Wardrop equilibrium might be asymptotically reached by agents in MDPCG in an online fashion.

Assume that we have a repeated game play, where players execute a fixed strategy determined at the start of each game. At the end of each game $k$, rewards of game $k$ based on $y^k_{tsa}$ are revealed to all players. FW models the population as having two sub-types: \emph{adventurous} and \emph{conservative}. Upon receiving reward information $\ell_{tsa}(y^{k}_{tsa})$, the adventurous population decides to change its strategy while the conservative population does not. To determine its new strategy, the adventurous population uses value iteration on the latest reward information---i.e.~Algorithm \ref{alg:valueIteration}---to compute a new optimal policy. Their resultant density trajectory is then computed using Algorithm \ref{alg:policy iteration}. The step size at each iteration is equivalent to the fraction of total population who switches strategy.  
The stopping criteria for the FW algorithm is determined by the Wardrop equilibrium notion---that is, as the population iteratively gets closer to an optimal strategy, the marginal increase in potential decreases to zero.
%; equivalently, no individual can gain any further advantage by switching strategies. 
\begin{algorithm}
\caption{Value Iteration Method}
\begin{algorithmic}[h]
\Require \(r\), \(P\).
\Ensure \( \{\pi^{\star}_{ts}\}_{ t \in [T], \, s\in \mc{S}} \)
\For{\(t=T, \ldots, 1\)}
	\ForEach{\(s\in\mathcal{S}\)}
		\State{\(\displaystyle V_{ts}= \underset{a\in\mathcal{A}}{\mbox{max}}\,\, Q_{tsa}\)} \Comment{Eqn.~\eqref{eq:qvalue}}
		\State{\(\displaystyle \pi^{\star}_{ts} = \underset{a\in\mathcal{A}}{\mbox{argmax}}\,\, Q_{tsa}\)}  \Comment{Eqn.~\eqref{eq:qvalue}}
	\EndFor %\ForEach{\(s\in\mathcal{S}\)}
\EndFor %\For{\(t=T-1, T-2, \ldots, 0\)}
\end{algorithmic}
\label{alg:valueIteration}
\end{algorithm}
\begin{algorithm}
\caption{Retrieving density trajectory from a policy }
\begin{algorithmic}[h]
\Require \(P\), \(p\), \(\pi\).
\Ensure \(\{d_{tsa}\}_{t \in[T], s\in\mc{S}, a\in\mc{A}}\)
\State{\(d_{tsa} = 0, \ \forall \ t \in [T], s \in \mc{S}, a \in \mc{A}\)}
\For{\(t=1, \ldots, T\)}
	\ForEach{\(s\in\mathcal{S}\)}
	    \If{\(t = 1\)}
	        \State{\(d_{1s\pi_{1s}} = p_s\)}
	    \Else
		    \State{\(d_{ts(\pi_{ts})} = \sum\limits_{a \in \mc{A}}\sum\limits_{s' \in \mc{S}} P_{t-1,ss'a}d_{t-1,s'a} \)}
		\EndIf
	\EndFor % \ForEach{\(s\in\mathcal{S}\)}
\EndFor % \For{\(t=T-1, T-2, \ldots, 0\)}
\end{algorithmic}
\label{alg:policy iteration}
\end{algorithm}
\begin{algorithm}
\caption{Frank Wolfe Method with Value Iteration}
\begin{algorithmic}[h]
\Require \(\bar{\ell}\), \(P\), \(p_s\), \(N\), \(\epsilon\).
\Ensure \(\{y^{\star}_{tsa}\}_{t \in [T], s\in\mc{S}, a \in \mc{A}}\).
\State{\( y^0 = 0 \in \reals^{T \times|\mc{S}| \times |\mc{A}|} \)}
\For{\(k = 1, 2, \ldots, N\)}
    \State{\(c^k_{tsa} = \bar{\ell}_{tsa}(y^k)\), \(\quad \forall \,\, t \in [T], s \in \mc{S}, a \in \mc{A}\)}
	\State{$\pi_{ts} = $ \Call{ValueIteration}{$c^k$, $ P$}}
	\Comment{Alg.~\ref{alg:valueIteration}}
	% \State{\(d^k\) = \Call{RetrieveDensity}{\(P, p_s, \pi_{ts}\)}}\Comment{Alg.~\ref{alg:policy iteration}}
	\State{\(d^k\) = \Call{RetrieveDensity}{$P,\ p_s,\ \pi_{ts}$}}\Comment{Alg.~\ref{alg:policy iteration}}
	%Algorithm~\ref{alg:policy iteration} 
	\State{\( \alpha^k =  \frac{2}{k+1} \)}
	\State{\( y^k = (1 - \alpha^k)y^{k-1} + \alpha^k d^k\)}
	\State{Stop if }
	\State{\(\quad \sum\limits_{t\in[T]} \sum\limits_{s\in\mc{S}} \sum\limits_{a\in\mc{A}}\Big(c^k_{tsa} - c^{k-1}_{tsa}\Big)^2 \leq \epsilon\)}
\EndFor
\end{algorithmic}
\label{alg:MSA}
\end{algorithm}

In contrast to implementations of FW in routing game literature, Algorithm \ref{alg:MSA}'s descent direction is determined by solving an MDP \cite[Section~4.5]{puterman2014markov} as opposed to a shortest path problem from origin to destination \cite[Sec.4.1.3]{patriksson2015traffic}. %In MDPCG's, solving the shortest path problem is replaced by solving an MDP.
%A notable departure from routing games is the replacement of shortest path problem from origin to destination by dynamic programming in Algorithm \ref{alg:valueIteration}.  
Algorithm~\ref{alg:MSA} is guaranteed to converge to a Wardrop equilibirum if the predetermined step sizes decrease to zero as a harmonic series \cite{freund2016new}---e.g., $\frac{2}{k+1}$.
%; examples of other such step size sequences can be found in \cite{powell1982convergence}. 
FW with predetermined step sizes has been shown to have sub-linear worst case convergence in routing games \cite{powell1982convergence}. 
%Therefore, a predetermined scheme is far from optimal for convergence to a Wardrop equilibrium. 
On the other hand, replacing fixed step sizes with optimal step sizes found by a line search method leads to a much better convergence rate.

\section{Numerical example}\label{sec:numerical examples}
% \textcolor{Red}{LJR: For Figure 2 on: lines need to be thicker and the font a lot larger so that it is readable. it is important to make slick pictures as it goes a long way for making the reviewer think things are legit}

% In this section, we apply the techniques developed in Section \ref{convergence} to model ride-sharing drivers competing for jobs in the metro Seattle area.  We show how a ride-sharing company could incentivize drivers so that th

% In particular, we show 

% the problem of a ride-sharing company incentivizing driver behavior as they compete for jobs in the metro Seattle area.

In this section, we apply the techniques developed in Section \ref{sec:convergence} to model competition among ride-sharing drivers in metro Seattle.  %, and a ride-sharing company seeking to constrain driver behavior. %, explained in detail in 
%In \ref{sec:rideshare model}, we develop the ride-sharing competition model.  
Using the set up described in Section \ref{sec:rideshare model}, we demonstrate how a ride-share company takes on the role of social planner and shifts the equilibrium of the driver game in the following two scenarios: 
\begin{list}{--}{}
    \item Ensuring minimum driver density in various neighborhoods (Section \ref{sec:lowerbound downtown}). 
    % A lower bound on state density is placed on downtown Seattle. The driver's reaction to modified rewards is simulated using Algorithm~\ref{MSA}. 
    \item Improving the social welfare (Section \ref{sec:braess paradox}). 
    % Constraints are generated to shift the user-optimized solution towards system optimal, then enforced using modified rewards. A trade-off between number of required constraints and $\epsilon$-optimality is analyzed. 
\end{list}

\subsection{Ride-sharing Model}
\label{sec:rideshare model}
\begin{figure}
\centering
\includegraphics[width=1.0\columnwidth]{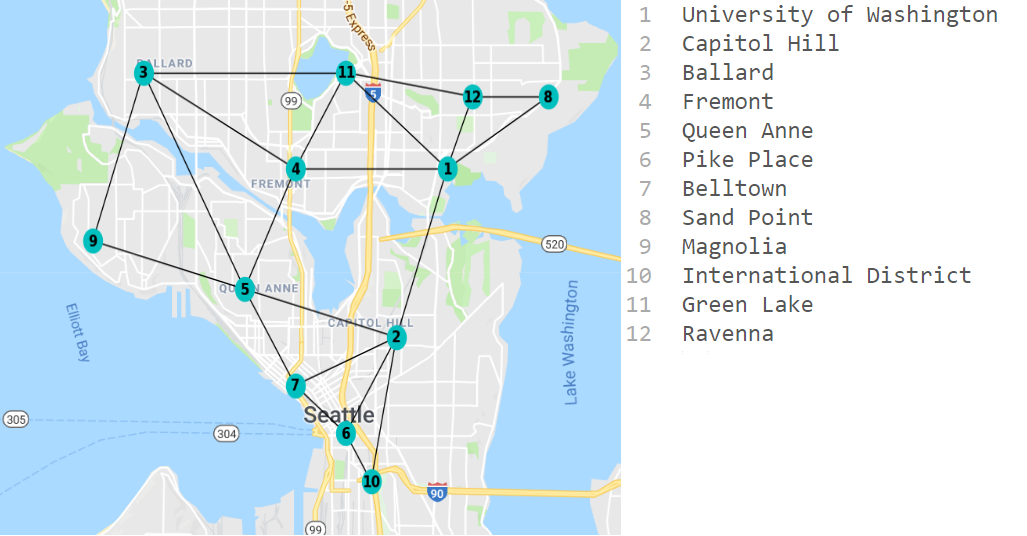}
\caption{State representation of metro Seattle.}
\label{fig:nbhds}
\end{figure}
%We model the Seattle metro area as an abstract set of states corresponding to individual neighbourhoods.
Consider a ride-share scenario in metro Seattle, where rational ride-sharing drivers seek to optimize their profits while repeatedly working Friday nights. Assume that the demand for riders is constant during the game and for each game play. The \emph{time step} is taken to be 15 minutes, i.e. the average time for a ride, after which the driver needs to take a new action.  

We model Seattle's individual neighbourhoods as an abstract set of \emph{states}, $s \in \mc{S}$, as shown in Fig \ref{fig:nbhds}. Adjacent neighbourhoods are connected by edges. The following states are characterized as residential: `Ballard' (3), `Fremont' (4), `Sand Point' (8), `Magnolia' (9), `Green Lake' (11), `Ravenna' (12). Assume drivers have equal probabilities of starting from any of the residential neighbourhoods.

Because drivers cannot see a rider's destination until after accepting a ride, the game has MDP dynamics. At each state $s$, drivers can choose from two actions.  $a_r$, wait for a rider in $s$, or $a_{s_j}$, transition to an adjacent state $s_j$. When choosing $a_r$, we assume the driver will eventually pick up a rider, though it may take more time if there are many drivers waiting for riders in that neighborhood. Longer wait times decrease the driver's reward for choosing $a_r$. 

On the other hand, there are two possible scenarios when drivers choose $a_{s_j}$. The driver either drives to $s_j$ and pays the travel costs without receiving a fare, or picks up a rider in $s_i$. We allow the second scenario with a small probability to model the possibility of drivers deviating from their predetermined strategy during game play.

The \emph{probability of transition} for each action at state $s_i$ are given in \eqref{eq:probs}. $N_i$ denotes the set of neighbouring states, and $|N_i|$ the number of neighbouring states for state $s_i$. 
\begin{equation}
    P(s,a,s_i)=
    \left\{\begin{array}{ll}
  \frac{1}{|N_i|+1}, & \text{if}\ s\in N_i, a=a_r\\
  \frac{1}{|N_i|+1}, & \text{if}\ s = s_i, a=a_r\\
  \frac{0.1}{|N_i|}, & \text{if}\ s\in N_i, s \neq s_j, a=a_{s_j}\\
  0.9, & \text{if}\ s \in N_i, s=s_j, a=a_{s_j}\\ 
  0,& \text{otherwise}
  \end{array}\right.
  \label{eq:probs}
\end{equation}
The \emph{reward function} for taking each action is given by %, and transitioning from state $s$ to state $s'$ is given by
\begin{align}
\ell_{tsa}(y_{tsa}) & =\mathds{E}_{s'} \left[m_{ts's} - c_{ts's}^\text{trav}\right] - c_{t}^\text{wait}\cdot y_{tsa} \notag \\
& = \sum_{s'}P_{ts'sa}\left[m_{ts's} - c_{ts's}^\text{trav}\right] - c_{t}^\text{wait}\cdot y_{tsa} \notag
\end{align}
where $m_{ts's}$ is the monetary cost for transitioning from state $s$ to $s'$, $c_{ts's}^\text{trav}$ is the travel cost from state $s$ to $s'$, $c_{t}^\text{wait}$ is the coefficient of the cost of waiting for a rider.  We compute these various parameters as
\begin{subequations}
\begin{align}
m_{ts's} & = \big(\text{Rate}\big) \cdot \big(\text{Dist}\big) \\
c_{ts's}^\text{trav} & = 
\tau 
\underbrace{\big(\text{Dist}\big)}_{\text{mi}}
\underbrace{\big(\text{Vel}\big)^{-1}}_{\text{hr}/\text{mi}} + 
%}_{\text{Time Cost}} 
\underbrace{\big(\substack{\text{Fuel} \\ \text{Price} } \big)}_{\$/\text{gal}}
\underbrace{\big( \substack{\text{Fuel} \\ \text{Eff}} \big)^{-1}}_{\text{gal}/\text{mi}} 
\underbrace{\big(\text{Dist}\big)}_{\text{mi}} \\
c^\text{wait}_{ta} & = 
\begin{cases}
\tau \cdot 
\Big(%\frac{1}{
\underbrace{
\substack{\text{Customer} \\ \text{Demand Rate} }%}
}_{\text{rides}/\text{hr}}
\Big)^{-1}, & \text{ if $a=a_r$} \\
\epsilon_{tsa_{s'}}, & \text{ if $a=a_s'$}
\end{cases}
\label{eq:travel}
\end{align}
\end{subequations}
where $\epsilon_{tsa_{s'}}$ is the congestion effect from drivers who all decide to traverse from $s$ to $s'$, and $\tau$ is a time-money tradeoff parameter, computed as 
$$\frac{\text{Rate}\cdot {D_{ave}}}{\text{Time Step}}$$
Where the average trip length, ${D_{ave}}$, is equivalent to the average distance between neighbouring states. The values independent of specific transitions are listed in the Tab.~\ref{tab:params}. 
\begin{table}[h!]
\begin{center}
\begin{tabular}{|cccccc|}
\hline
Rate  & Velocity & Fuel Price & Fuel Eff & $\tau$ & ${D_{ave}}$   \\
\hline
\$6 /mi & 8 mph & \$2.5/gal & 20 mi/gal & \$27 /hr  & 1.25 mi\\
\hline
\end{tabular}
\end{center}
\caption{Parameters for the driver reward function.}\label{tab:params}
\end{table}
\subsection{Ensuring Minimum Driver Density}
\label{sec:lowerbound downtown}
To ensure rider satisfaction, the ride-share company aims for a minimum driver coverage of 10 drivers in `Belltown', $s = 7$, a neighborhood with highly variable rider demand. 
To this end, they solve the optimization problem in \eqref{eq:CMDPCG} where \eqref{eq:newconstraints} for $t\in \{3, \ldots, T\}$, $s = 7$, take on the form
\begin{equation}
\begin{aligned}
g_{i}(y) = \sum_{a} y_{tsa} -  10 
\end{aligned}
\label{eq:downtown constraint}
\end{equation}
The modified rewards~\eqref{eq:modcosts} are given by
\begin{align}
\bar{r}_{tsa}(y) & = \ell_{tsa}(y_{tsa}) + \tau^\star_{ts}
\label{eq:generalized cost}
\end{align}
where each $\tau^\star_{ts}$ is the optimal dual variable corresponding to each new constraint.  

The optimal population distribution in `Belltown' (state 7) and  an adjacent neighbourhood, `Capitol Hill' (state 2), are shown in Fig.~\ref{fig:unconstrained solution}.
\begin{figure}
\center
\includegraphics[trim={0 4cm 0 0}, width=1.0\columnwidth] {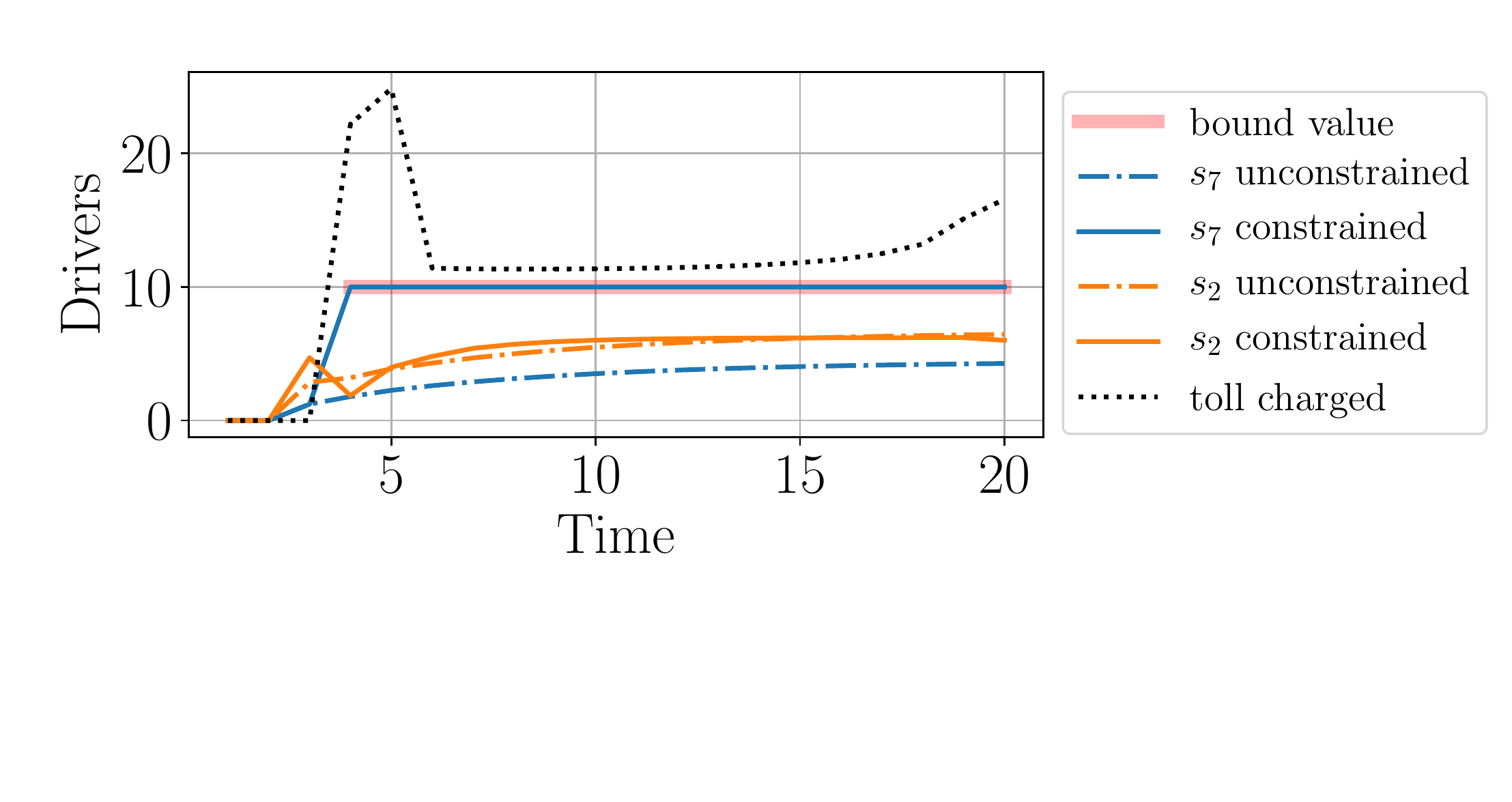}
\caption{State density of the optimal trajectory solution to \eqref{eq:CMDPCG}. A constraint of form \eqref{eq:downtown constraint} is placed on `Belltown', for $t \in [3,20]$. The imposed constraints also affect optimal population distribution of other states, as shown by changes in the population distribution of neighboring state $s = 2$. }
\label{fig:unconstrained solution}
\end{figure}
Note that the incentive $\tau^\star_{ts}$ is applied to all actions of state $s$. Furthermore, if the solution to the unconstrained problem is feasible for the constrained problem, then $\tau^{\star}_{ts} = 0$---i.e.~no incentive is offered. 
We simulate drivers' behaviour with Algorithm \ref{alg:MSA}, as a function of decreasing termination tolerance $\epsilon $. In Fig.~\eqref{fig:MSA simulation}, the result shows that the optimal population distribution from the FW algorithm converge to Wardrop equilibrium as the approximation tolerance $\epsilon$ decreases.

\begin{figure}
\center
\subfloat{\includegraphics[trim={3cm 1cm 1cm 1cm},width=0.7\columnwidth]{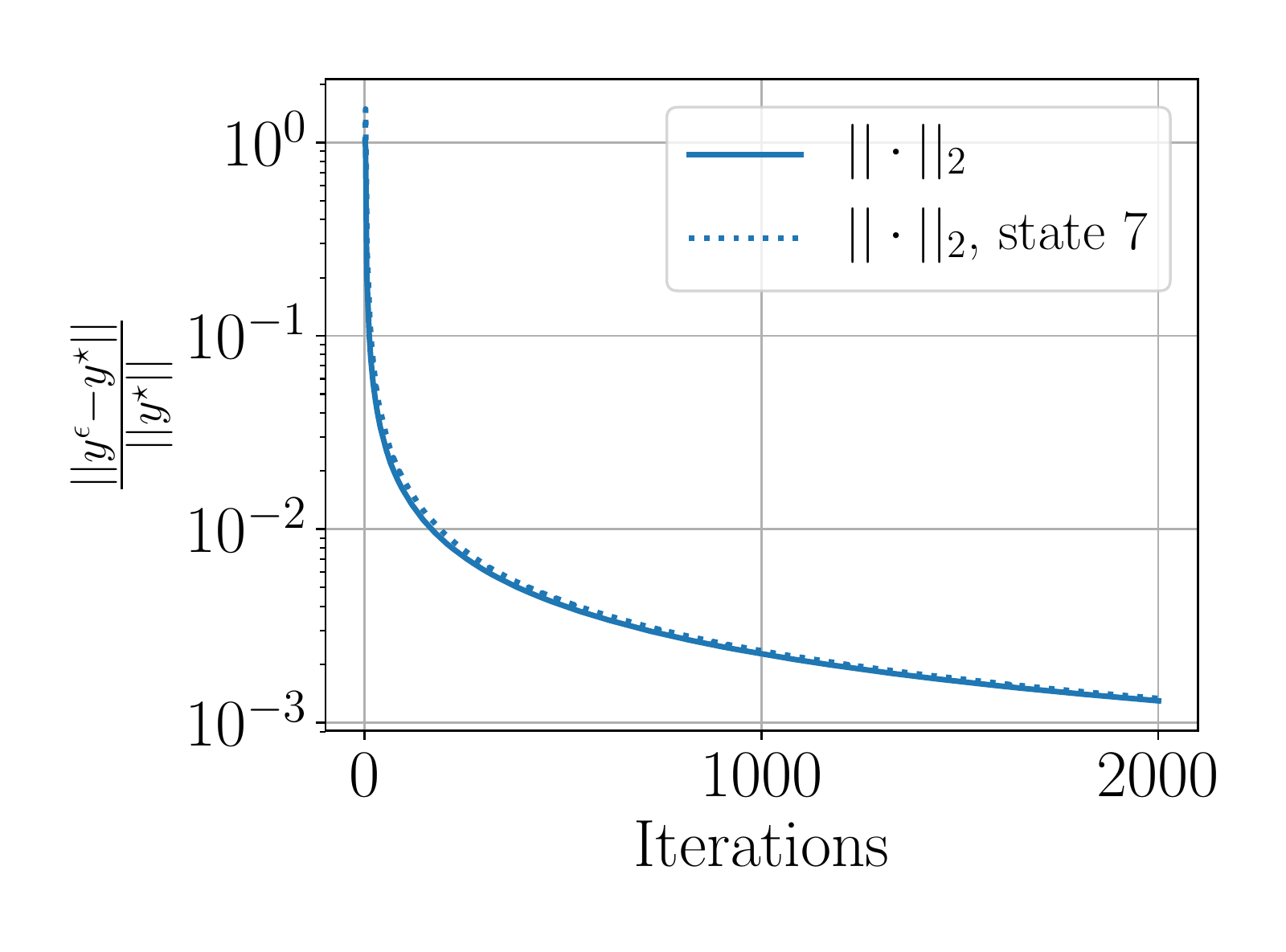}
} 
\caption{Convergence of $y^\epsilon$ to $y^\star$. Plot shows difference between constrained optimal solution and FW approximation, $\norm{y^{\epsilon} - y^{\star}}_2 $, normalized by $\norm{y^{\star}}_2$. }
\label{fig:MSA simulation}
\end{figure}

\subsection{Increasing Social Welfare}
\label{sec:braess paradox}
In most networks with congestion effects, the population does not achieve the maximum \emph{social welfare}, which can be achieved by optimizing~\eqref{eq:mdpgame} with objective
\begin{equation}
J(y) = \sum\limits_{t\in[T]}\sum\limits_{s\in\mathcal{S}} \sum\limits_{a\in\mathcal{A}}y_{tsa}r_{tsa}(y)
\label{eq:social objective}
\end{equation} 
In general, a gap exists between $J(x^{\star})$ and $J(y^{\star})$, where $y^\star=\{y^\star_{tsa}\}$ is the optimal solution to \eqref{eq:mdpgame}, and $x^\star=\{x^\star_{tsa}\}$ is the optimal solution to \eqref{eq:mdpgame} with objective \eqref{eq:social objective}. 

The typical approach to closing the social welfare gap is to impose mass dependent incentives. An alternative method, perhaps under-explored, is to impose constraints. As opposed to congestion dependent taxation methods for improving social welfare~\cite{pigou2017economics,beckmann1952continuous}, constraint generated tolls are congestion independent. 

We can compare the two distributions and generate upper/lower bound constraints with an $\epsilon$ threshold---see Algorithm~\ref{alg:constraint gen} for constraint selection method. The number of constraints increases with decreasing $\epsilon$. Since the objective function in \eqref{eq:mdpgame} is continuous in $y_{tsa}$, as $\epsilon$ approaches zero, the objective will also approach the social optimal. 
\begin{algorithm}
\caption{Constraint Generation}
\begin{algorithmic}[h]
\Require \(x^\star\), \(y^\star\).
\Ensure \(\mc{U} = \{ ( u_i, t,s,a) \in \reals \times [T] \times \mc{S}\times\mc{A}\}\)
\State{    \(\quad\quad\mc{L} = \{ (l_i, t,s,a) \in \reals \times [T] \times \mc{S}\times\mc{A}\}\)}
\ForEach{\(s\in\mathcal{S}\), \(a\in\mathcal{A}\), \(t\in[T]\)}
	\If{\(y^\star_{tsa} - x^\star_{tsa} > \epsilon \)}
		\State{\((x^\star_{tsa}, t,s,a) \rightarrow \mc{U} \)}
	\ElsIf{\(y^\star_{tsa} - x^\star_{tsa} < \epsilon\)}
		\State{\((x^\star_{tsa}, t,s,a) \rightarrow \mc{L} \)}
	\EndIf
\EndFor
\end{algorithmic}
\label{alg:constraint gen}
\end{algorithm}
\begin{figure}
\center
\subfloat{\includegraphics[trim={1.4cm 1cm 0 0.5cm},width=0.37\textwidth] {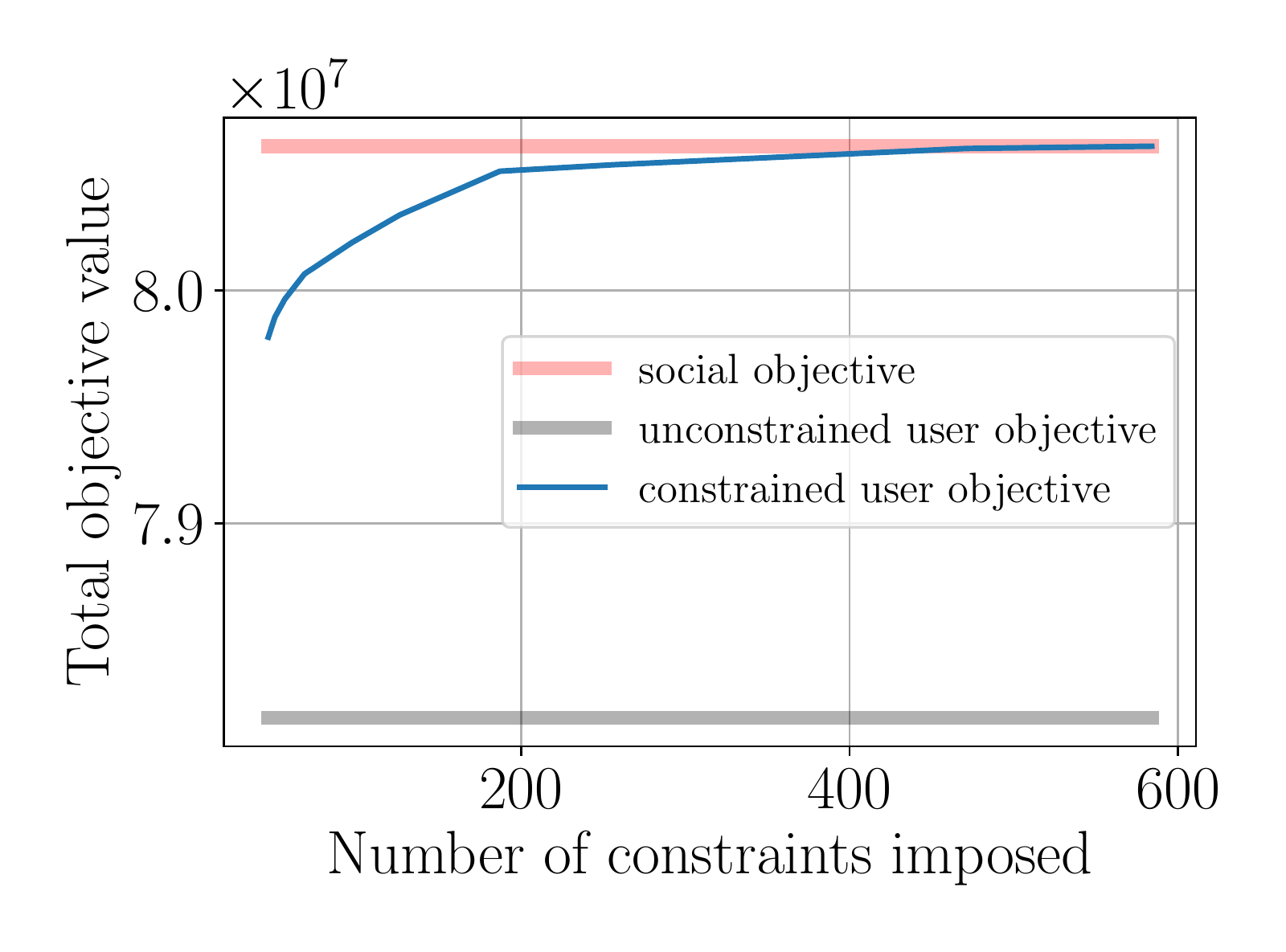}} 
\caption{With population of $3500$, the social welfare of the user selected equilibrium is shown as a function of the number of imposed constraints; increasing the number of constraints is equivalent to  decreasing $\epsilon$ tolerance.}
\label{fig:braessParadox}
\end{figure}
\begin{figure}
\center
\subfloat{\includegraphics[trim={1.4cm 1.5cm 0 0}, width=0.37\textwidth]{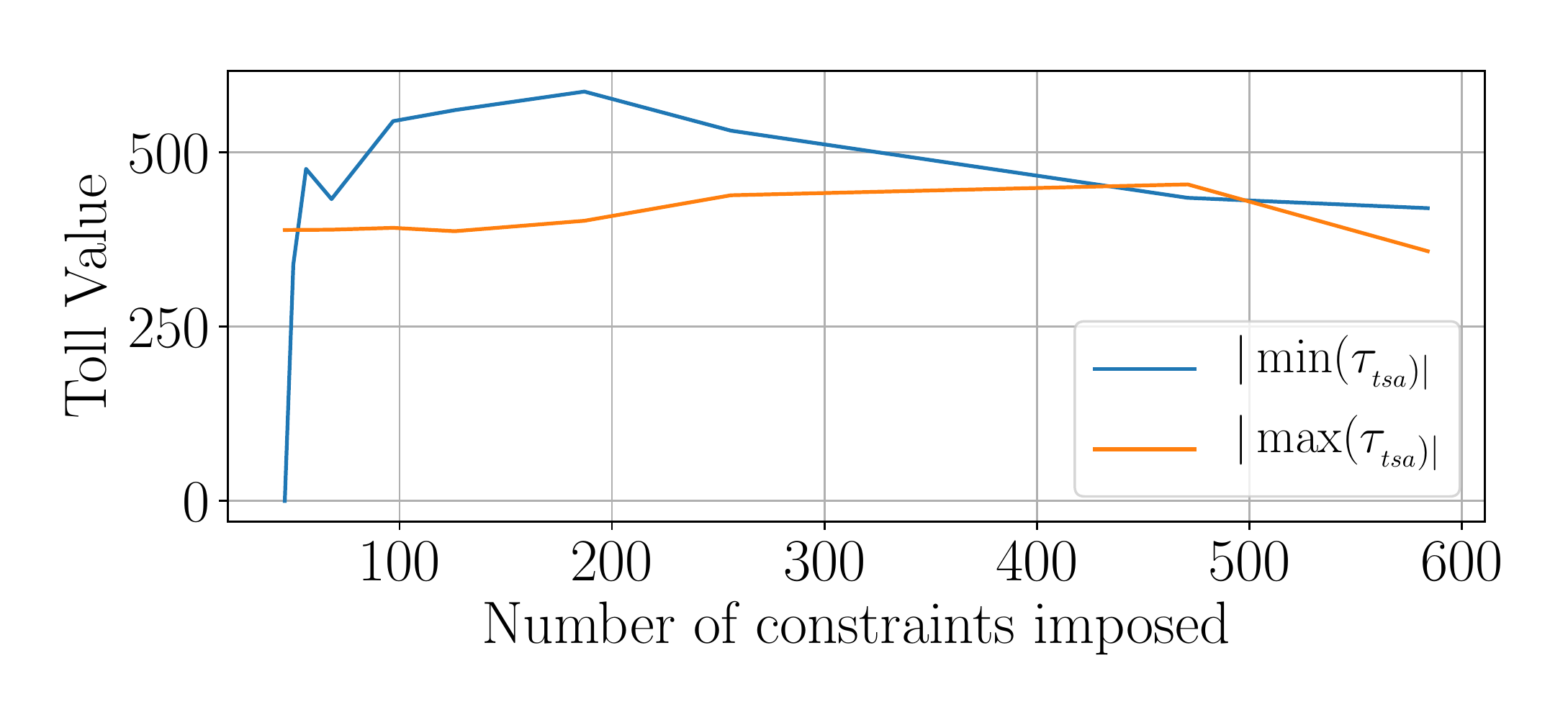}} \newline
\subfloat{\includegraphics[trim={1cm 4cm 0 0}, width=0.45\textwidth]{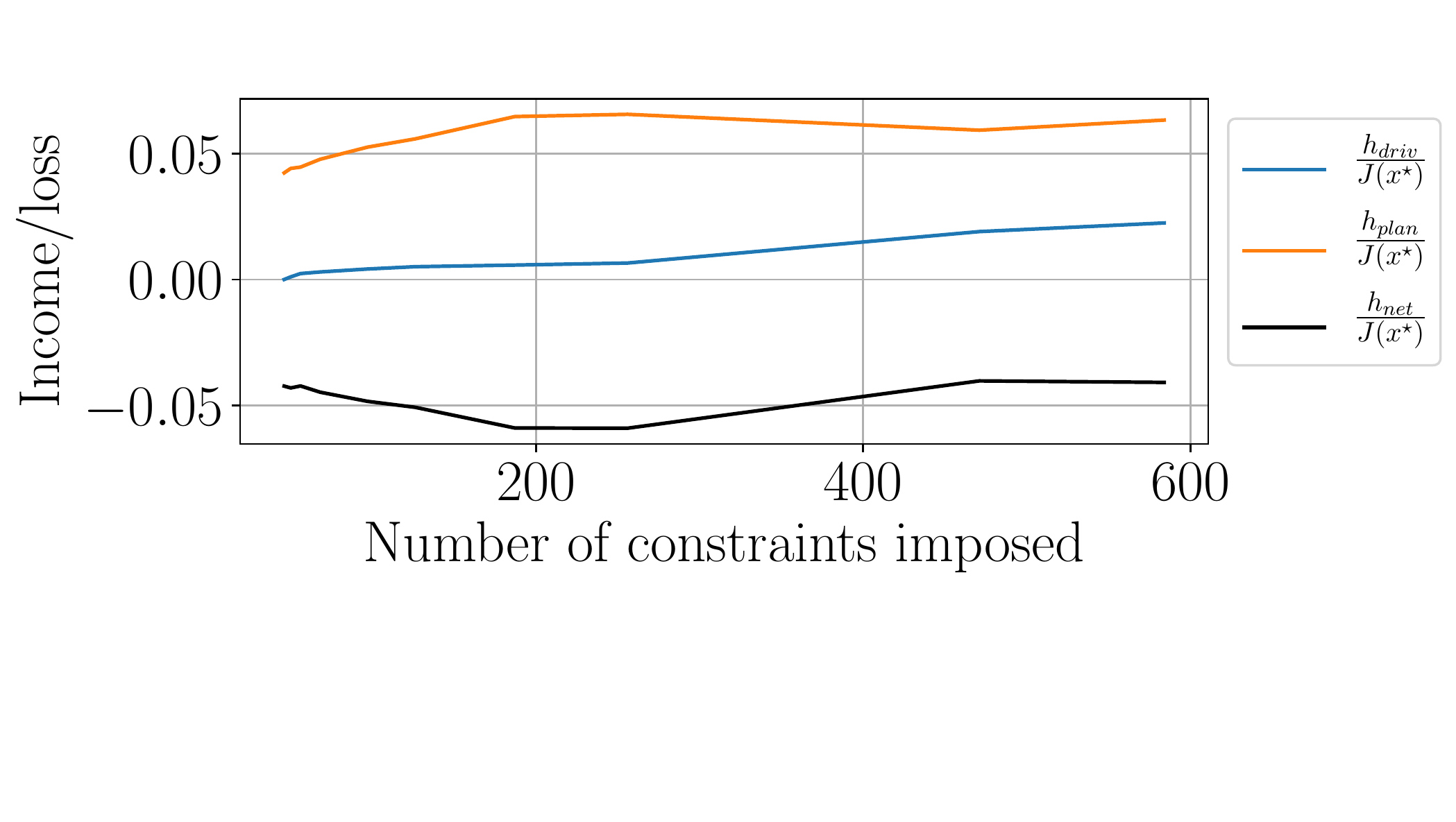}} 
\caption{Maximum and minimum toll values are shown in (a) as a function of number of constraints. In (b), the income/loss required to increase social welfare is shown as a function of constraints imposed.}
\label{fig:tollAnalysis}
\end{figure}
In Fig.~\ref{fig:braessParadox}, we compare the optimal social welfare to the social welfare at Wardrop equilibrium of the unconstrained congestion game, modeled in Section \eqref{sec:rideshare model}, as a function of the population size. We use CVXPY \cite{diamond2016cvxpy} to solve the optimization problem. 
 
We utilize Algorithm~\ref{alg:constraint gen} to generate incentives for the congestion game. Then, we simulate~\eqref{eq:CMDPCG} and compare the game output to the social objective in Fig.~\ref{fig:braessParadox}.  For a population size of $3500$, there is a discernible gap between the social and user-selected optimal values.  Note that with only $200$ $(t,s,a)$ constraints, the gap between the social optimal and the user-selected equilibrium is already less than $5\%$.

An interesting question to ask is how much of the total market worth is affected by the incentives. In Fig.~\ref{fig:tollAnalysis}, we demonstrate how payouts vary based on the number of constraints imposed. Let $(\cdot)_- = \min\{0, \cdot\}$ and $(\cdot)_+ = \max\{0, \cdot\}$.  The total payout from the drivers to the social planner and the social planner to the drivers are given by 
\begin{align*}
h_\text{driv} & = \sum_{tsa}y_{tsa}|(\tau_{tsa})_-|, \quad h_\text{plan} & =  \sum_{tsa}y_{tsa}(\tau_{tsa})_+
\end{align*}
The net revenue the social planner receives from tolls is
$$
h_\text{net} = \sum_{tsa} y_{tsa}(\tau_{tsa}) = h_\text{plan} - h_\text{driv}.
$$
Fig.~\ref{fig:tollAnalysis}(b) shows how these quantities change as the total number of constraints is increased.

\section{Conclusions}\label{conclusion}
We presented a method for adjusting the reward functions of a MDPCG in order to shift the Wardrop equilibrium to satisfy population mass constraints. Applications of this constraint framework have been demonstrated in a ride-share example in which a social planner aims to constrain state densities or to maximize overall social gain without explicitly constraining the population. Future work include developing online methods that updates incentives corresponding to constraints while the game population adjusts its strategy.

\bibliographystyle{IEEEtran}
\bibliography{reference}

% Generated by IEEEtran.bst, version: 1.12 (2007/01/11)
\begin{thebibliography}{10}
\providecommand{\url}[1]{#1}
\csname url@samestyle\endcsname
\providecommand{\newblock}{\relax}
\providecommand{\bibinfo}[2]{#2}
\providecommand{\BIBentrySTDinterwordspacing}{\spaceskip=0pt\relax}
\providecommand{\BIBentryALTinterwordstretchfactor}{4}
\providecommand{\BIBentryALTinterwordspacing}{\spaceskip=\fontdimen2\font plus
\BIBentryALTinterwordstretchfactor\fontdimen3\font minus
  \fontdimen4\font\relax}
\providecommand{\BIBforeignlanguage}[2]{{%
\expandafter\ifx\csname l@#1\endcsname\relax
\typeout{** WARNING: IEEEtran.bst: No hyphenation pattern has been}%
\typeout{** loaded for the language `#1'. Using the pattern for}%
\typeout{** the default language instead.}%
\else
\language=\csname l@#1\endcsname
\fi
#2}}
\providecommand{\BIBdecl}{\relax}
\BIBdecl

\bibitem{calderone2017markov}
D.~Calderone and S.~S. Sastry, ``Markov decision process routing games,'' in
  \emph{Proc. Int. Conf. Cyber-Physical Syst.}\hskip 1em plus 0.5em minus
  0.4em\relax ACM, 2017, pp. 273--279.

\bibitem{calderone2017infinite}
D.~Calderone and S.~Shankar, ``Infinite-horizon average-cost markov decision
  process routing games,'' in \emph{Proc. Intell. Transp. Syst.}\hskip 1em plus
  0.5em minus 0.4em\relax IEEE, 2017, pp. 1--6.

\bibitem{wardrop1952}
J.~G. Wardrop, ``Some theoretical aspects of road traffic research,'' in
  \emph{Inst. Civil Engineers Proc. London/UK/}, 1952.

\bibitem{beckmann1952continuous}
M.~Beckmann, ``A continuous model of transportation,'' \emph{Econometrica}, pp.
  643--660, 1952.

\bibitem{patriksson2015traffic}
M.~Patriksson, \emph{The traffic assignment problem: models and methods}.\hskip
  1em plus 0.5em minus 0.4em\relax Courier Dover Publications, 2015.

\bibitem{shapley1953stochastic}
L.~S. Shapley, ``Stochastic games,'' \emph{Proc. Nat. Acad. Sci.}, vol.~39,
  no.~10, pp. 1095--1100, 1953.

\bibitem{mertens1981stochastic}
J.-F. Mertens and A.~Neyman, ``Stochastic games,'' \emph{Int. J. Game Theory},
  vol.~10, no.~2, pp. 53--66, 1981.

\bibitem{lasry2007mean}
J.-M. Lasry and P.-L. Lions, ``Mean field games,'' \emph{Japan J. Math.},
  vol.~2, no.~1, pp. 229--260, 2007.

\bibitem{gueant2015existence}
O.~Gu{\'e}ant, ``Existence and uniqueness result for mean field games with
  congestion effect on graphs,'' \emph{Appl. Math. Optim.}, vol.~72, no.~2, pp.
  291--303, 2015.

\bibitem{larsson1995augmented}
T.~Larsson and M.~Patriksson, ``An augmented lagrangean dual algorithm for link
  capacity side constrained traffic assignment problems,'' \emph{Transport.
  Res. B}, vol.~29, no.~6, pp. 433--455, 1995.

\bibitem{hearn1980bounding}
D.~Hearn, ``Bounding flows in traffic assignment models,'' \emph{Research
  Report}, pp. 80--4, 1980.

\bibitem{altman1999constrained}
E.~Altman, \emph{Constrained Markov decision processes}.\hskip 1em plus 0.5em
  minus 0.4em\relax CRC Press, 1999, vol.~7.

\bibitem{el2018controlled}
M.~El~Chamie, Y.~Yu, B.~Acikmese, and M.~Ono, ``Controlled markov processes
  with safety state constraints,'' \emph{IEEE Transa. Autom. Control}, 2018.

\bibitem{roughgarden2005selfish}
T.~Roughgarden, \emph{Selfish routing and the price of anarchy}.\hskip 1em plus
  0.5em minus 0.4em\relax MIT press Cambridge, 2005, vol. 174.

\bibitem{calderonephdthesis}
D.~Calderone, ``Models of competition for intelligent transportation
  infrastructure: Parking, ridesharing, and external factors in routing
  decisions,'' Ph.D. dissertation, U.C. Berkeley, ProQuest ID:
  Calderone\_berkeley\_0028E\_17079, 5 2017, an optional note.

\bibitem{pigou2017economics}
A.~Pigou, \emph{The economics of welfare}.\hskip 1em plus 0.5em minus
  0.4em\relax Routledge, 2017.

\bibitem{cole2006much}
R.~Cole, Y.~Dodis, and T.~Roughgarden, ``How much can taxes help selfish
  routing?'' \emph{Journal of Computer and System Sciences}, vol.~72, no.~3,
  pp. 444--467, 2006.

\bibitem{furuhata2013ridesharing}
M.~Furuhata, M.~Dessouky, F.~Ord{\'o}{\~n}ez, M.-E. Brunet, X.~Wang, and
  S.~Koenig, ``Ridesharing: The state-of-the-art and future directions,''
  \emph{Transport. Res.}, vol.~57, pp. 28--46, 2013.

\bibitem{rideshareguy}
\BIBentryALTinterwordspacing
J.~Knope. {Download These 12 Best Apps For Uber Drivers \& Lyft Drivers}.
  [Online]. Available:
  \url{https://therideshareguy.com/12-must-have-apps-for-rideshare-drivers}
\BIBentrySTDinterwordspacing

\bibitem{wired}
\BIBentryALTinterwordspacing
M.~Katz. {This App Let's Drivers Juggle Competing Uber and Lyft Rides}.
  [Online]. Available:
  \url{https://www.wired.com/story/this-app-lets-drivers-juggle-competing-uber-and-lyft-rides/}
\BIBentrySTDinterwordspacing

\bibitem{pbs}
\BIBentryALTinterwordspacing
P.~Solman. {How Uber drivers game the app and force surge pricing}. [Online].
  Available:
  \url{https://www.pbs.org/newshour/economy/uber-drivers-game-app-force-surge-pricing}
\BIBentrySTDinterwordspacing

\bibitem{ahmed2012uncertain}
A.~Ahmed, P.~Varakantham, and S.-F. Cheng, ``Uncertain congestion games with
  assorted human agent populations,'' \emph{arXiv preprint arXiv:1210.4848
  [cs.GT]}, 2012.

\bibitem{freund2016new}
R.~M. Freund and P.~Grigas, ``New analysis and results for the frank--wolfe
  method,'' \emph{Math. Program.}, vol. 155, no. 1-2, pp. 199--230, 2016.

\bibitem{jovanovic1988anonymous}
B.~Jovanovic and R.~W. Rosenthal, ``Anonymous sequential games,'' \emph{J.
  Math. Econ.}, vol.~17, no.~1, pp. 77--87, 1988.

\bibitem{bergin1992anonymous}
J.~Bergin and D.~Bernhardt, ``Anonymous sequential games with aggregate
  uncertainty,'' \emph{J. Math. Econ.}, vol.~21, no.~6, pp. 543--562, 1992.

\bibitem{bergin1995anonymous}
------, ``Anonymous sequential games: existence and characterization of
  equilibria,'' \emph{Econ. Theory}, vol.~5, no.~3, pp. 461--489, 1995.

\bibitem{wikecek2015stationary}
P.~Wiecek and E.~Altman, ``Stationary anonymous sequential games with
  undiscounted rewards,'' \emph{J. Optim. Theory Appl.}, vol. 166, no.~2, pp.
  686--710, 2015.

\bibitem{gueant2011mean}
O.~Gu{\'e}ant, J.-M. Lasry, and P.-L. Lions, ``Mean field games and
  applications,'' in \emph{Paris-Princeton lectures on mathematical finance
  2010}.\hskip 1em plus 0.5em minus 0.4em\relax Springer, 2011, pp. 205--266.

\bibitem{gueant2011infinity}
O.~Gu{\'e}ant, ``From infinity to one: The reduction of some mean field games
  to a global control problem,'' \emph{arXiv preprint arXiv:1110.3441
  [math.OC]}, 2011.

\bibitem{gomes2010discrete}
D.~A. Gomes, J.~Mohr, and R.~R. Souza, ``Discrete time, finite state space mean
  field games,'' \emph{J. Math. Pures Appl.}, vol.~93, no.~3, pp. 308--328,
  2010.

\bibitem{gueant2014mean}
O.~Gu{\'e}ant, ``Mean field games on graphs,'' in \emph{NETCO 2014}, 2014.

\bibitem{Sandholm2001}
W.~H. Sandholm, ``{Potential Games with Continuous Player Sets},'' \emph{J.
  Econ. Theory}, vol.~97, no.~1, pp. 81--108, 2001.

\bibitem{powell1982convergence}
W.~B. Powell and Y.~Sheffi, ``The convergence of equilibrium algorithms with
  predetermined step sizes,'' \emph{Transport. Sci.}, vol.~16, no.~1, pp.
  45--55, 1982.

\bibitem{puterman2014markov}
M.~L. Puterman, \emph{Markov decision processes: discrete stochastic dynamic
  programming}.\hskip 1em plus 0.5em minus 0.4em\relax John Wiley \& Sons,
  2014.

\bibitem{diamond2016cvxpy}
S.~Diamond and S.~Boyd, ``Cvxpy: A python-embedded modeling language for convex
  optimization,'' \emph{The Journal of Machine Learning Research}, vol.~17,
  no.~1, pp. 2909--2913, 2016.

\end{thebibliography}

\end{document}